\newcommand{\mR}{\mathbb{R}}
\newcommand{\mc}{\mathcal}
\newcommand{\Tha}{\Theta}
\renewcommand{\leq}{\leqslant}
\renewcommand{\geq}{\geqslant}
\newtheorem{theo}{Theorem}
\newtheorem{Def}{Definition}
\newtheorem{assume}{Assumption}
\newtheorem{lemma}{Lemma}
\DeclareMathOperator{\eps}{\epsilon}
\newcommand{\ket}{|\varphi\rangle}
\newcommand{\bra}{\langle\varphi|}
\newcommand{\ketn}{|\varphi^{(n)}\rangle}
\newcommand{\bran}{\langle\varphi^{(n)}|}
\newcommand{\ketpsi}{|\psi\rangle}
\newcommand{\keti}{|\varphi_i\rangle}
\newcommand{\brai}{\langle\varphi_i|}
\newcommand{\ketk}{|k\rangle}
\newcommand{\brak}{\langle k|}
\newcommand{\ketjp}{|\psi_{j'}\rangle}
\newcommand{\brajp}{\langle\psi_{j'}|}
\newcommand{\tr}{\text{Tr}}
\newenvironment{proof}[1][Proof]{\noindent\textbf{#1.} }{\ \rule{0.5em}{0.5em}}
\newcolumntype{L}[1]{>{\raggedright\let\newline\\arraybackslash\hspace{0pt}}m{#1}}
\newcolumntype{C}[1]{>{\centering\let\newline\\arraybackslash\hspace{0pt}}m{#1}}
\newcolumntype{R}[1]{>{\raggedleft\let\newline\\arraybackslash\hspace{0pt}}m{#1}}
\begin{document}
\begin{frontmatter}

\title{Finitely Repeated Adversarial Quantum Hypothesis Testing} 


\author[First]{Yinan Hu} 
\author[First]{Quanyan Zhu} 

\address[First]{Department of Electrical and Computer Engineering, New York University, Brooklyn, NY, 11201, USA (e-mail: \{yh1817, qz494\}@nyu.edu)}

\begin{abstract}                
We formulate a passive quantum detector based on a quantum hypothesis testing framework under the setting of finite sample size. In particular, we exploit the fundamental limits of performance of the passive quantum detector asymptotically. Under the assumption that the attacker adopts separable optimal strategies, we derive that the worst-case average error bound converges to zero exponentially in terms of the number of repeated observations, which serves as a variation of quantum Sanov's theorem.  We illustrate the general decaying results of miss rate numerically, depicting that the `naive' detector manages to achieve a miss rate and a false alarm rate both exponentially decaying to zero given infinitely many quantum states, although the miss rate decays to zero at a much slower rate than a quantum non-adversarial counterpart. Finally we adopt our formulations upon a case study of detection with quantum radars.    
\end{abstract}

\begin{keyword}
Game Theory; Modeling, Identification and Signal Processing; Cybersecurity; Quantum Signal Processing; Hypothesis Testing.
\end{keyword}
\end{frontmatter}

\section{Introduction}
The study of quantum systems has caught increasing attention both in theories and applications since the construction and manipulation of entangled quantum states based on quantum circuits of a variety of realizations such as photons \cite{wang2015entangled_photon_angular_momentum,wang2016experimental_ten_photon}, ion traps \cite{pogorelov2021compact_ion_traps}  and nuclear magnetic resonances (NMR) \cite{mkadzik2022precision_nmr}, have reached significant breakthroughs during the past several decades. The detection and discrimination of quantum states play a crucial role in building quantum systems. 

One possible way to countermeasure the attack on the detection process is to collect more samples from the target quantum state before arriving at a decision. Such methods lead to the formulations of sequential quantum hypothesis testing \cite{vargas2021quantum_seq_hypo_testing} and of fixed sample size quantum hypothesis testing. In fixed sample size hypothesis testing, the detector is given finite, independent, and repeated observations from the same  before making a decision.  The sequential quantum hypothesis testing framework \cite{vargas2021quantum_seq_hypo_testing} inherits Wald's sequential analysis \cite{wald2004sequential}, where instead of accepting all of the samples at the same time, the detectors can stop accepting samples further when the predetermined error rates are satisfied.  

Although sequential hypothesis testing requires in general fewer samples than fixed sample size tests to reach the same error rate \cite{levy2008principles_signal_detection} and could be implemented online, the framework of hypothesis testing with finite repeated observations still admits its own advantages, especially for analytical purposes. There are several large-deviation theorems such as quantum versions of Stein's lemma \cite{hiai1991quantum_stein_lemma} and Sanov's theorem \cite{li2014quantum_asymptotic_hypo_sanov} characterizing that a quantum detector can achieve the exponential decay of both false alarm rate and miss rate to zero when the sample size becomes infinitely large.   

Challenges also arise when uncertainty exists in  quantum systems. One type of uncertainties is quantum noise in the environment resulting in the undesired correlation of the target system with the environment and incurring a loss of information of the target state due to decoherence \cite{schlosshauer2007decoherence}. Another type of challenges is   adversarial attacks against the target quantum system, the transmission channel, or the measurement scheme. For instance, authors in \cite{Portmann_2022_security_quantum_cryptography} summarize recent attacks and their corresponding countermeasures in quantum cryptography systems. Authors in \cite{blakely2022quantum_limits_spoofing} have studied the quantum limitations of detection upon classical spoofing attacks when the samples are electromagnetic signals. Thus when designing a quantum system that detects and discriminates quantum states, it is crucial to take into consideration malicious attacks and design countermeasures to enhance the detection performances under tainted quantum states. 

To this end, we develop a game-theoretic framework to study how an attacker (He) could affect the asymptotic performance of a quantum detector (She) under repeated observations. Game-theoretic formulations have been consolidated into classical hypothesis testing formulations to study the security of detection schemes under strategically contaminated samples \cite{hu2022game_NP}\cite{yasodharan2019hypo_test_game}. In a generic binary quantum hypothesis scenario, the detector knows that the target systems are characterized by an unknown density matrix that fits the ones underlying the hypothesis $H_0$ or $H_1$. The detector is given $n$ non-coherent quantum states drawn from the same unknown density state before making a decision on the true hypothesis. With adversaries, those quantum states are simultaneously manipulated by the attacker before being delivered to the passive quantum detector. We assume that the detector is naive about the distortion and applies a quantum hypothesis testing framework to decide the true hypothesis $H_0$ or $H_1$ and thus formulate the relationship between the passive detector and the attacker as a Stackelberg game \cite{Stackelberg2010market}.       

Our theoretical and numerical analyses show that the detector  still manages to reach an exponential decay to zero for both types of errors. However, the rate at which the miss rate converges to zero under adversaries is much smaller than the one under the adversary-free scenario. If we introduce local robustness condition, assuming the attacker's quantum distortion can be constrained by a norm, then we manage to develop worst-case convergence rate of the miss rate characterized by distributionally robust optimization \cite{rahimian2019distributionally_review} upon quantum relative entropy \cite{von2018mathematical_QM}.     

The rest of the paper is organized as follows. In section \ref{sec:format} we briefly review the formulation of the passive quantum detector that we developed in \cite{hu2022quantum_MITHM}. Later we extend the formulation into quantum passive detectors with repeated observations. In section \ref{sec:error_bound} we analyze the asymptotic performances of the passive quantum detector. In section \ref{sec:radar_case_study} we apply our adversarial sequential quantum hypothesis testing formulation in a case study of quantum radar detectors, which send out repeated pulses to detect the absence or presence of a target object under spoofing attacks.   
We conclude in section \ref{sec:conclusion}.
\subsection{Notations} 
We apply the traditional Dirac's bra-ket notations  \cite{dirac1981principles} to denote generic quantum states: $\bra\in\mc{H}^*,\ket \in\mc{H}$, where $\mc{H}$ represents a Hilbert space (over the reals $\mR$) with dimension $d$. For arbitrary $\ket, \ketpsi\in\mc{H}$, we denote their inner product as $\langle \psi\ket$. The norm of $\varphi$ is defined as $\|\varphi\| = \sqrt{\langle \varphi\ket}$.  
 Let $B(\mc{H})$ be the space of all positive, Hermitian, and bounded operators from $\mc{H}$ to itself, with their norms defined as their largest absolute value of the eigenvalues $\|\cdot\|_{1}$. Let $\mc{S}$ be the subset of $B(\mc{H})$ such trace of its operators is $1$. The space of all positive-operator valued measurements is denoted as $\mc{V}$.

Furthermore, we introduce the notation $(\cdot)^{\otimes n}$ to characterize the $n$-fold tensor product of the set or the quantum state. For example, $B(\mc{H})^{\otimes n}$ means applying tensor product of $B(\mc{H})^{\otimes n}$ upon itself $n$ times. We define $\mc{S}^{\otimes n}, \mc{V}^{\otimes n}$ in a similar way. 

\section{The passive quantum detector with fixed sample size}
\label{sec:format}
In this section, we formulate the passive quantum detector given multiple copies from the same unknown target system.  Let $\rho\in \mc{S}$ be the density operator of the unknown target system. We assume the scenario of binary hypothesis testing \cite{helstrom1969quantum}, that is, only one density matrix is included under $H_0,H_1$ respectively:
\begin{equation}
    H_0:\rho= \rho_0,\; H_1:\rho = \rho_1,
    \label{hypo_rho1_rho0}
\end{equation}
where 
\begin{equation}
\rho_0= \sum_{j = 1}^d{r^0_j\ketjp\brajp},\;\;\rho_1 = \sum_{i = 1}^d{r^1_i\keti\brai}, 
\end{equation}
with $\sum_{i}{r^1_i} = \sum_{j}{r^1_j} = 1,\;r^1_j,r^0_i\geq 0$ and $\ketjp,\keti\in \mc{H},\; i,j=1,2,\dots,d$. The two sets of quantum states  $\{\ketjp\},\{\keti\}$ span the same Hilbert space $\mc{H}$ of dimension $d$, but we assume that $\{\ketjp\}^d_{j=1},\{\keti\}^d_{i=1}$ are two distinct bases, hence the two density operators $\rho_0,\rho_1$ do not necessarily commute. 
 
Under repeated observations, the target system produces $n$ quantum states $\ketn\in\mc{H}^{\otimes n}$ which are collected by a detector (He), who wants to arrive at a decision rule $\delta\in \mc{H}^{\otimes n}\rightarrow [0,1]$ on the genuine characterization of the target system in \eqref{hypo_rho1_rho0}: $\delta(\varphi^{(n)}) = k,\;k\in\{0,1\},$ when he thinks the hypothesis $H_k$ holds true. According to the measurement postulate of quantum mechanics \cite{von2018mathematical_QM}, the detector decides by applying a measurement $\Pi^n_1\in\mc{V}^{\otimes n}$ upon the received quantum state $\ketn$:
\begin{equation}
    \delta(\varphi^{(n)}) = \bran \Pi^n_1 \ketn.
\end{equation}

The decision rule $\delta$, or equivalently the measurement $\Pi^n_1$, leads to a detection rate $P^n_D: \mc{V}\rightarrow [0,1]$  and a false alarm rate $P^n_F:\mc{V}\rightarrow [0,1]$ as follows:
\begin{align}
     P^n_D(\Pi^n_1) = \tr(\Pi^n_1\rho^{\otimes n}_1), \;\; P^n_F(\Pi^n_1) = \tr(\Pi^n_1\rho^{\otimes n}_0).
     \label{PD_PF}
\end{align}

A passive quantum detector does not know that the quantum states have been distorted. Hence we assume that she does not change her decision rule based on the quantum states she receives. 
Thus we let the detector's cost function $u_B:\mc{V}\times \mc{S}\times \mc{S}\rightarrow \mR$ be the Bayesian risk, which is a weighting sum of \textit{counterfactual} miss rate and \textit{counterfactual} false alarm rate in \eqref{PD_PF} as if the quantum state were untainted:    
\begin{equation}
\begin{aligned}
    \underset{\Pi^n_1\in\mc{V}^{\otimes n}}{\min}&u^{(n)}_B(\Pi^n_1,\rho'^{\otimes n}_1,\rho'^{\otimes n}_0) \\
    \Leftrightarrow  \underset{\Pi^n_1\in\mc{V}^{\otimes n}}{\min}& c_{1,n} \tr((\mathbf{1}-\Pi^n_1)\rho^{
    \otimes n}_1) + c_{0,n}\tr(\Pi^{n}_1\rho^{\otimes n}_0),\;\;
\label{detector_stackelberg_passive}
\end{aligned}
\end{equation}
where $c_{0,n},c_{1,n}$ only depend on the number of observations $n$. Referring to \cite{helstrom1969quantum} and denoting $\tau(n) = \frac{c_{1,n}}{c_{0,n}}$, the detector's optimal solution measurement $\Pi^{n*}_1$ can be designed as follows:
\begin{equation}
    \Pi^{n*}_1(\tau) = \sum_{\gamma_j>0}{|\gamma_j\rangle\langle \gamma_j|},
    \label{sol_stackelberg_detector}
\end{equation}
where $\{|\gamma_j\}^{d^n}_{j=1}$ are orthogonal eigenstates of $\rho^{\otimes n}_1-\tau(n)\rho^{\otimes n}_0$ with eigenvalues $\gamma_j$.

The detector's decision threshold $\tau(n)$ only depends on the number of quantum states received. We could set the threshold $\tau: [N]\rightarrow \mR_+$
\begin{equation}
    \tau(n) = \frac{\bar{c}_0 n + \bar{d}_0}{\bar{c}_1 n + \bar{d}_1},\;n\in[N],
\end{equation}
where $\bar{c}_0,\bar{c}_1,\bar{d}_0,\bar{d}_1>0$ are constants. The quantum detector develops a passive decision rule $\bar{\delta}^{(n)}:\mc{H}^{\otimes n}\rightarrow [0,1]$ through a collective measurement upon the $n$ quantum states:
\begin{equation}
    \bar{\delta}^{(n)}(\ketn) = \bran\Pi^{n*}_1\ketn,
    \label{detector_decision_rule}
\end{equation}
We now formulate the attacker's actions. Similar to the space of quantum operations on a single quantum state discussed in \cite{hu2022quantum_MITHM}, we formulate the attacker's actions as the space of collective quantum noisy operations $(E^{0,n},E^{1,n})$ upon the $n$-fold quantum system observed from the user ($\rho_0$ or $\rho_1$) as follows. 
\begin{equation}
\begin{aligned}
 \mc{E}^{(n)} &= \{(E^{0,n}, E^{1,n})\in {B}(\mc{H})^{\otimes 2n}:  \\
 &\sum_{k}{E^{l,n}_kE^{l,n\dagger}_k} = \mathbf{1},\;l\in\{0,1\}\}.
 \label{action_space_attacker_Alice}
\end{aligned}
\end{equation}

\begin{figure}
    \centering
    \includegraphics[scale=0.27]{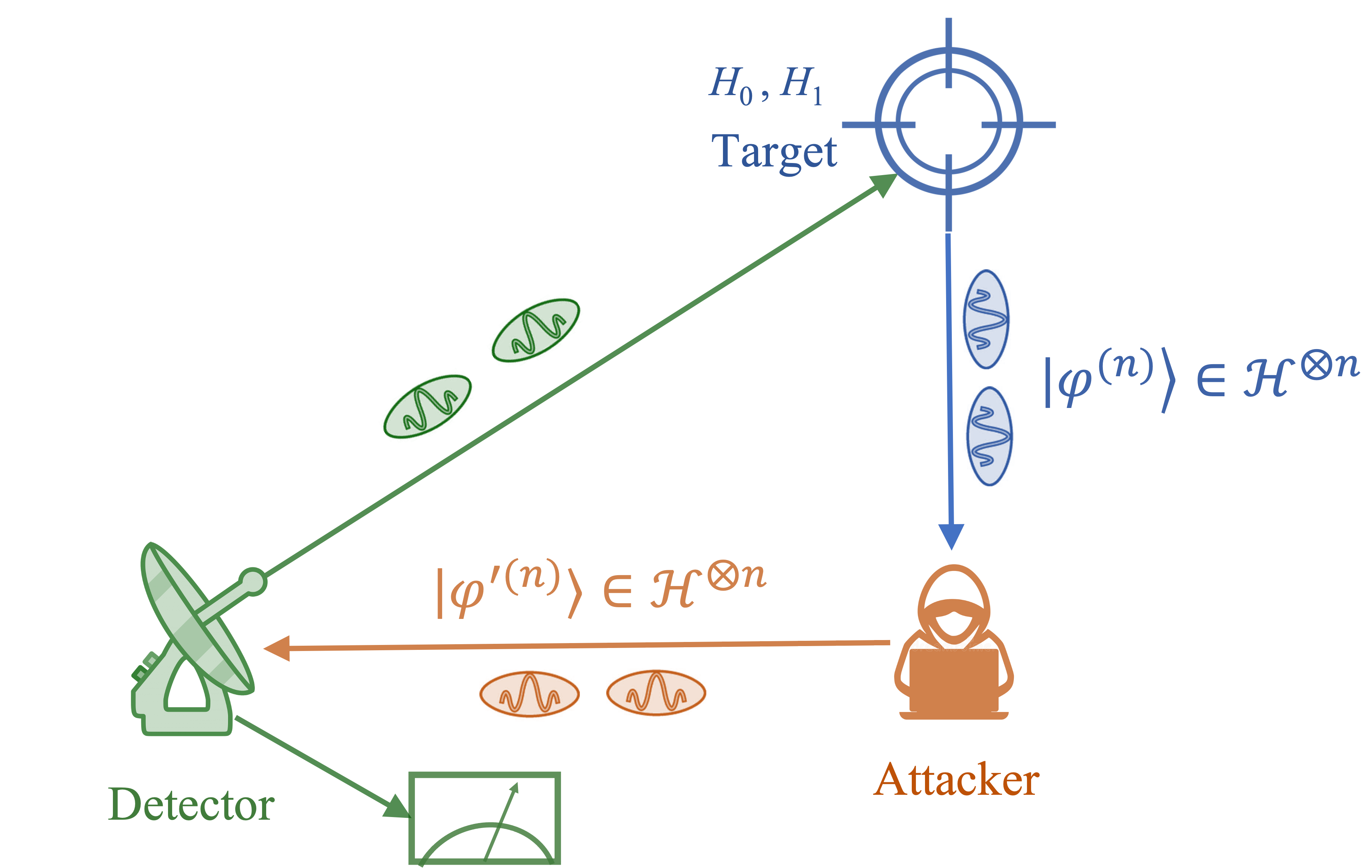}
    \caption{The scheme of adversarial quantum state detection with repeated observations. The target system produces $n$ quantum states $\ketn$ from different target systems depending on true hypothesis $H_0$ or $H_1$.  The attacker intervenes and substitutes the $n$ clean quantum states with the manipulated states. One case study is the target detection of quantum radars (e.g., \cite{slepyan2021quantum_radar_lidar}) under spoofing attacks.}
    \label{fig:quantum_man_in_the_middle}
\end{figure}

Knowing detector's strategy \eqref{sol_stackelberg_detector}, the detector designs her optimal strategies $\rho^{(n)*}_1,\rho^{(n)*}_0$ by minimizing her utility function $u^{(n)}_A: \mc{S}^{\otimes n}\times \mc{S}^{\otimes n}\times \mc{V}^{\otimes n}\rightarrow \mR$ as follows:
\begin{equation}
\begin{aligned}
&\underset{\substack{\rho^{(n)}_0 \in \mc{S}^{\otimes n} \\\rho^{(n)}_1\in\mc{S}^{\otimes n}}}{\min}  u^{(n)}_A(\Pi^{n*}_1,\rho^{ (n)}_1,\rho^{(n)}_0)\\
\Longleftrightarrow & \underset{\substack{\rho^{(n)}_0 \in \mc{S}^{\otimes n} \\\rho^{(n)}_1\in\mc{S}^{\otimes n}}}{\min}\tr(\Pi^{n*}_1\rho^{(n)}_1) + \beta_n S(\rho^{(n)}_1\|\rho^{\otimes n}_1)  \\
&\qquad\quad+\beta_n  S(\rho^{( n)}_0\|\rho^{\otimes n}_0),
    \label{attacker_stackelberg}
\end{aligned}
\end{equation}
 where we adopt the Von-Neumann relative entropy \cite{nielsen2002quantum_information} for any two density operators $\nu_1,\nu_0\in\mc{S}^{\otimes n}$ for any $n\in [N]$ as
\begin{equation}
    S(\nu_1\|\nu_0) := \tr(\nu_1(\ln \nu_1-\ln\nu_0))
\end{equation}
and $\Pi^{n*}_1$ is the solution to detector's optimization problem \eqref{detector_stackelberg_passive}. The regularization parameter $\beta_n$ depends on $n$ only. For simplicity, we set $\beta_n = \lambda^n$ with $\lambda\geq 0$ being a constant. 

Referring to \eqref{detector_stackelberg_passive} and \eqref{attacker_stackelberg}, we develop the Stackelberg game $\mc{G}^n$ between the attacker and the passive quantum detector as follows.
\begin{Def}[Game of passive detector]
We define the relationship between the passive quantum detector and the attacker as a Stackelberg game $\mc{G}^n$ (e.g., \cite{Stackelberg2010market}) with the following tuples
\begin{equation}
    \mc{G}^n = \langle \mc{I}, \Tha,  \mc{E}^{(n)}, \mc{V}^{\otimes n}, u^{(n)}_A, u^{(n)}_B \rangle,
\end{equation}
where $\mc{I}= \{\text{Attacker},\;\text{Detector}\}$ refers to the set of players; $\Tha = \{H_0,H_1\}$ refers to the set of true hypotheses specified in \eqref{hypo_rho1_rho0}; The set $\mc{E}^{(n)}$ in \eqref{action_space_attacker_Alice} characterizes the attacker's strategy space; the set of measurements $\mc{V}^{\otimes n}$ characterizes the detector's space of decision rules. The functions $u^{(n)}_A, u^{(n)}_B$ specified in \eqref{detector_stackelberg_passive} and \eqref{attacker_stackelberg} are the objectives of the attacker and the detector, respectively. 
\label{def:stackelberg_detector}
\end{Def}

\subsection{Equilibrium analysis}

Notice in the game $\mc{G}^n$ the attacker and the detector adopt sequential rationality to design their optimal strategies \cite{fudenberg1998game}. The formulations \eqref{detector_decision_rule} and \eqref{sol_stackelberg_detector} characeterize the detector's optimal decision rules. For the attacker's optimal strategies, i.e. optimal distorted density matrices $\rho^{(n)*}_0,\rho^{(n)*}_1\in S^{\otimes n}$, we state the following theorem.
\begin{prop}[Attacker's optimal strategies]
\label{attacker_strategies_stackelberg_prop}
Let $\mc{G}$ be the Stackelberg game between the attacker and the detector (the passive quantum detector) as mentioned in definition \ref{def:stackelberg_detector}. Then we can derive the attacker's optimal strategies, characterized by distorted density operators $\rho'^{(n) *}_1, \rho'^{(n) *}_0 $  as follows:
\begin{align}
    \rho^{(n)*}_0&=\rho^{\otimes n}_0,
    \label{sol0_attacker_Stackelberg}  \\
    \rho^{(n)*}_1&=  \frac{\exp(\ln\rho^{\otimes n}_1 - \frac{1}{\beta_n}\Pi^{n*}_1) }{\tr(\exp(\ln\rho^{\otimes n}_1 - \frac{1}{\beta_n}\Pi^{n*}_1))}.
    \label{sol1_attacker_Stackelberg}    
\end{align}

\end{prop}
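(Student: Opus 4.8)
The plan is to exploit that the attacker's objective in \eqref{attacker_stackelberg} decouples across the two operators: $\rho^{(n)}_0$ enters only through $\beta_n S(\rho^{(n)}_0\|\rho^{\otimes n}_0)$, whereas $\rho^{(n)}_1$ enters the remaining two terms. For the $\rho^{(n)}_0$ part I would invoke Klein's inequality (nonnegativity of the Von-Neumann relative entropy), giving $S(\rho^{(n)}_0\|\rho^{\otimes n}_0)\geq 0$ with equality exactly when $\rho^{(n)}_0=\rho^{\otimes n}_0$. Since $\beta_n=\lambda^n\geq 0$, the minimizer is immediately $\rho^{(n)*}_0=\rho^{\otimes n}_0$, which establishes \eqref{sol0_attacker_Stackelberg}.

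The substance lies in minimizing $f(\rho):=\tr(\Pi^{n*}_1\rho)+\beta_n S(\rho\|\rho^{\otimes n}_1)$ over $\rho\in\mc{S}^{\otimes n}$. My preferred route is a completing-the-square argument that sidesteps matrix calculus. I would first \emph{define} the candidate $\sigma:=\rho^{(n)*}_1$ as the right-hand side of \eqref{sol1_attacker_Stackelberg}, with normalizer $Z:=\tr\exp(\ln\rho^{\otimes n}_1-\tfrac{1}{\beta_n}\Pi^{n*}_1)$; this is a genuine density operator because the exponential of a Hermitian operator is positive definite. Taking logarithms yields the operator identity $\tfrac{1}{\beta_n}\Pi^{n*}_1=\ln\rho^{\otimes n}_1-\ln\sigma-(\ln Z)\mathbf{1}$. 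Substituting this into $\tr(\Pi^{n*}_1\rho)$ and expanding $S(\rho\|\rho^{\otimes n}_1)=\tr(\rho\ln\rho)-\tr(\rho\ln\rho^{\otimes n}_1)$, the two $\tr(\rho\ln\rho^{\otimes n}_1)$ contributions cancel and, using $\tr\rho=1$, the objective collapses to $f(\rho)=\beta_n S(\rho\|\sigma)-\beta_n\ln Z$. As $-\beta_n\ln Z$ is independent of $\rho$ and $S(\rho\|\sigma)\geq 0$ with equality iff $\rho=\sigma$, the unique minimizer is $\rho^{(n)*}_1=\sigma$, precisely \eqref{sol1_attacker_Stackelberg}.

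As a cross-check one may instead form the Lagrangian with a single multiplier $\mu$ for the constraint $\tr\rho=1$, apply the operator identity $\tfrac{d}{d\rho}\tr(\rho\ln\rho)=\ln\rho+\mathbf{1}$, and solve the stationarity condition $\Pi^{n*}_1+\beta_n(\ln\rho-\ln\rho^{\otimes n}_1)+(\beta_n+\mu)\mathbf{1}=0$ for $\ln\rho$; the multiplier is then absorbed into the normalization, reproducing the same formula. Global optimality along this route follows from convexity of $f$ (linearity of $\tr(\Pi^{n*}_1\rho)$ plus convexity of $\rho\mapsto S(\rho\|\rho^{\otimes n}_1)$ in its first argument), so any stationary point is the global minimum; the positivity constraint $\rho\succeq 0$ is automatically met since $\sigma$ is a matrix exponential.

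The main obstacle I anticipate is structural rather than computational. First, $\Pi^{n*}_1$ and $\rho^{\otimes n}_1$ need not commute, so $\ln\sigma$ is genuinely the logarithm of a single matrix exponential and cannot be split additively; the completing-the-square argument is attractive precisely because it manipulates $\ln\sigma$ as one operator and never requires any Baker--Campbell--Hausdorff expansion. Second, finiteness of the relative entropy demands $\supp(\rho)\subseteq\supp(\rho^{\otimes n}_1)$, so $\ln\rho^{\otimes n}_1$ is well defined only on that support. I would therefore assume $\rho_1$ is full rank (all $r^1_i>0$), or else restrict the entire argument to the support of $\rho^{\otimes n}_1$, so that the logarithm and exponential are unambiguous.
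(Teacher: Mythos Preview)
Your argument is correct and in fact more complete than the paper's. The paper proceeds by naive matrix calculus: it differentiates $u^{(n)}_A$ with respect to $\rho^{(n)}_0$ and $\rho^{(n)}_1$, sets the gradients to zero to obtain $\Pi^{n*}_1+\beta_n(\ln\rho^{(n)}_1-\ln\rho^{\otimes n}_1)=0$ and $\beta_n(\ln\rho^{(n)}_0-\ln\rho^{\otimes n}_0)=0$, and then normalizes. That is essentially your ``cross-check'' paragraph. Your primary route, however, is genuinely different: rather than computing a stationary point, you define the candidate $\sigma$ up front and rewrite the objective exactly as $f(\rho)=\beta_n S(\rho\|\sigma)-\beta_n\ln Z$, so that global optimality (and uniqueness) drops out of Klein's inequality with no appeal to convexity or to the well-definedness of matrix derivatives. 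This Gibbs-variational reformulation buys you two things the paper's proof omits: a clean justification that the stationary point is the global minimum, and automatic handling of the positivity constraint. Your attention to the support condition on $\rho^{\otimes n}_1$ is also a technical point the paper glosses over.
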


\section{Error bounds under separable attacks}
\label{sec:error_bound}

In the previous section we study the attacker's optimal strategies upon distorting $n$ copies of the target system. In this section we study the quantum detector's detection performance on the $n$-fold product of the target system under adversaries. We first recall some fundamental theorems on large deviations of quantum hypothesis testing under adversary-free scenarios such as quantum Sanov's theorem \cite{li2014quantum_asymptotic_hypo_sanov} and quantum Stein's lemma \cite{hiai1991quantum_stein_lemma}. In this section, we aim to develop a worst-case bound under the transformation of measures of the quantum states under each hypothesis. 

Notice that in general the attacker's distortions characterized in \eqref{sol0_attacker_Stackelberg}\eqref{sol1_attacker_Stackelberg} can not be decomposed into tensor products of elements in $\mc{S}$. For the sake of analyzing the adversarial detection performance, we introduce extra assumptions to obtain separable optimal strategies: 
\begin{lemma}
Let $\rho^{(n)*}_0,\rho^{(n)*}_1\in\mc{S}^{\otimes n}$ be the attacker's optimal density operators in \eqref{sol0_attacker_Stackelberg} and \eqref{sol1_attacker_Stackelberg}. Then if $\rho^{\otimes n}_1 - \frac{1}{\beta_n}\Pi^{n*}_1$ has only separable quantum states as its eigenvectors, then the attacker's optimal strategies \eqref{sol0_attacker_Stackelberg}\eqref{sol1_attacker_Stackelberg} are separable, that is, there exist $\alpha_{(j)}\in \mc{S},\;j\in[n]$ such that 
\begin{equation}
\begin{aligned}
    \rho^{(n)*}_1& =  \frac{\exp(\ln\rho^{\otimes n}_1 - \frac{1}{\beta_{n}}\Pi^{ n*}_1) }{\tr(\exp(\ln\rho^{\otimes n}_1 - \frac{1}{\beta_{n}}\Pi^{ n*}_1))}\\ & = \alpha_{(1)}\otimes \dots \alpha_{(n)}.
    \label{separability}
\end{aligned}
\end{equation}
\end{lemma}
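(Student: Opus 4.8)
The plan is to reduce the whole statement to the spectral structure of the single Hermitian operator $M := \ln\rho^{\otimes n}_1 - \frac{1}{\beta_n}\Pi^{n*}_1$ sitting in the exponent of \eqref{sol1_attacker_Stackelberg}. The $H_0$ strategy requires no work: by \eqref{sol0_attacker_Stackelberg}, $\rho^{(n)*}_0 = \rho^{\otimes n}_0$ is already an $n$-fold tensor power, hence separable in the sense of \eqref{separability} with each factor equal to $\rho_0$, so all the content lies in $\rho^{(n)*}_1 = \exp(M)/\tr\exp(M)$. Since this is obtained from $M$ through the analytic functional calculus, $M$ and $\rho^{(n)*}_1$ share one orthonormal eigenbasis: writing $M = \sum_k \mu_k\,|v_k\rangle\langle v_k|$ gives $\rho^{(n)*}_1 = \sum_k (e^{\mu_k}/Z)\,|v_k\rangle\langle v_k|$ with $Z = \tr\exp(M)$. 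Consequently the product eigenvectors $|v_k\rangle$ postulated in the hypothesis are exactly the eigenvectors of $\rho^{(n)*}_1$, which already exhibits $\rho^{(n)*}_1$ as a convex combination of product projectors; the remaining task is to upgrade this to the genuine tensor factorization \eqref{separability}.

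The decisive structural input I would exploit is that the drift term is a sum of single-site operators: from $\ln(A\otimes B) = \ln A\otimes \mathbf{1} + \mathbf{1}\otimes \ln B$ one gets $\ln\rho^{\otimes n}_1 = \sum_{j=1}^n \mathbf{1}^{\otimes(j-1)}\otimes \ln\rho_1\otimes \mathbf{1}^{\otimes(n-j)}$, whose eigenbasis is the grid basis $\{|\varphi_{i_1}\rangle\otimes\cdots\otimes|\varphi_{i_n}\rangle\}$ formed from the single-copy eigenstates of $\rho_1$, with additive eigenvalue $\sum_j\ln r^1_{i_j}$. The next step is to show that, under the hypothesis, $\Pi^{n*}_1 = \beta_n(\ln\rho^{\otimes n}_1 - M)$ is diagonal in this same grid: once $M$ and $\ln\rho^{\otimes n}_1$ are simultaneously diagonalized by a product basis, the projector inherits that basis and becomes $\Pi^{n*}_1 = \sum_{\vec i\in S}|\varphi_{i_1}\rangle\langle\varphi_{i_1}|\otimes\cdots\otimes|\varphi_{i_n}\rangle\langle\varphi_{i_n}|$ for some index set $S\subseteq [d]^n$.

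With both summands diagonal in the grid basis, $M$ carries eigenvalue $\mu_{\vec i} = \sum_j\ln r^1_{i_j} - \tfrac{1}{\beta_n}\mathbf{1}[\vec i\in S]$, and $\exp(M)$ factorizes across the $n$ slots precisely when this spectrum is additively separable, $\mu_{\vec i}=\sum_j\nu^{(j)}_{i_j}$. In that case $\exp(M)=\bigotimes_{j=1}^n\exp(M_j)$ for local operators $M_j$, and normalizing yields \eqref{separability} with $\alpha_{(j)}=\exp(M_j)/\tr\exp(M_j)$, equivalently $\alpha_{(j)}=\tr_{[n]\setminus\{j\}}\rho^{(n)*}_1$. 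The main obstacle is exactly this additive separability. The drift part $\sum_j\ln r^1_{i_j}$ is already additive, so the whole burden falls on the $0/1$ indicator $\mathbf{1}[\vec i\in S]$, which splits additively only for product-compatible index sets $S$; and establishing that $S$ has this form is itself delicate, since $\Pi^{n*}_1$ is a spectral projector of $\rho^{\otimes n}_1-\tau(n)\rho^{\otimes n}_0$ and, when $\rho_0,\rho_1$ do not commute, need not even share an eigenbasis with $\ln\rho^{\otimes n}_1$. I would therefore treat the product-eigenbasis hypothesis as the precise lever that both aligns $\Pi^{n*}_1$ with the grid and pins $S$ down to a separable form, and I expect the crux of the proof to be converting that hypothesis into the additive splitting of the spectrum of $M$.
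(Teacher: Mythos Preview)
Your analysis goes considerably beyond the paper's own argument. The paper's proof consists of exactly two sentences: write the eigen-decomposition of $\rho^{\otimes n}_1-\tfrac{1}{\beta_n}\Pi^{n*}_1$ with product eigenvectors, then ``construct $\alpha_{(1)},\dots,\alpha_{(n)}$ by taking partial traces.'' That is the entire proof; it neither mentions additive separability of the spectrum nor verifies that the partial traces actually tensor back to the original state.

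You are right to be uneasy. The obstacle you isolate is genuine: a density operator whose eigenvectors are all product vectors need \emph{not} be a tensor product. The state $\tfrac12|00\rangle\langle 00|+\tfrac12|11\rangle\langle 11|$ already shows this; its partial traces are maximally mixed, and their tensor product is $\tfrac14\mathbf{1}$, not the original operator. Hence the step ``take partial traces'' does not by itself yield \eqref{separability}, and your demand that the eigenvalues $\mu_{\vec i}$ split additively is exactly the missing ingredient. The hypothesis as stated in the lemma is not strong enough to force this splitting, so the gap you flag is real and cannot be closed without an additional assumption on the structure of the acceptance set $S$ (or, equivalently, on $\Pi^{n*}_1$). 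The paper simply does not engage with this point.

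One smaller remark: the lemma's hypothesis is phrased for $\rho^{\otimes n}_1-\tfrac{1}{\beta_n}\Pi^{n*}_1$, whereas you work throughout with $M=\ln\rho^{\otimes n}_1-\tfrac{1}{\beta_n}\Pi^{n*}_1$, which is the operator actually appearing in the exponent. These are different operators with, in general, different eigenbases; your substitution is natural for the intended conclusion but is not what the lemma literally assumes. The paper's proof uses the former, consistent with its own statement, so if you want to match the paper you should start from that operator and then pass to $M$ (which requires a further commutation argument you have not supplied).
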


\begin{proof}
We could write $\rho^{\otimes n}_1 - \frac{1}{\beta_n}\Pi^{n*}_1$ in the form of eigen-decomposition given all of its eigenvectors are separable:
\begin{equation}
\begin{aligned}
  &\rho^{\otimes n}_1 - \frac{1}{\beta_n}\Pi^{n*}_1\\ 
  &=\sum_{j_1,\dots, j_n}{p_{j_1,\dots, j_n}|\varphi_{j_1}\rangle\dots|\varphi_{j_n}\rangle\langle\varphi_{j_1}|\dots \langle\varphi_{j_n}|}.
  \label{eigenvectors_separable}
\end{aligned}
\end{equation}
We can then construct $\alpha_{(1)}\dots \alpha_{(n)}$ by taking partial traces upon the right hand side of \eqref{eigenvectors_separable}.
\end{proof}

We now analyze the impact of the attacker's strategy, which is characterized by $\rho^{(n)*}_0,\rho^{(n)*}_1$ in \eqref{sol0_attacker_Stackelberg}\eqref{sol1_attacker_Stackelberg} upon the detector's performance under $n$ repeated observations. We assume lemma 1 holds and for simplicity, further assume also that the attacker applies the same distortion (denoted as $E_1,E_0\in\mc{E}^{(1)}$) for every of the $n$ quantum states.   Since the attacker's strategies could depend on the parameter $\lambda$, the distorted density operator $\rho'^*_1$ could still lie anywhere. As a result, we could consider the detector's worst-case performance asymptotically.  We first introduce the extension of detection rate $P^n_D:\mc{V}^{\otimes n}\rightarrow [0,1]$ and false alarm rate $P^n_F:\mc{V}^{\otimes n}\rightarrow [0,1]$ under measurements $\Pi^n_1$ upon $n$ repeated observations as 
\begin{equation}
    P^n_D(\Pi^n_1) = \tr(\Pi^n_1\rho^{(n)}_1),\;\;P^n_F(\Pi^n_1) = \tr(\Pi^n_1\rho^{(n)}_0).
    \label{detection_rate_n}
\end{equation}

The quantum Wasserstein distance of order 1 is generalized from the classical Wasserstein distance \cite{villani2008optimal}.
We adopt the metric to describe the Wasserstein distance of order 1 \cite{de2021quantum_wasserstein_distance_1} between two quantum density matrices $\sigma_1,\sigma_0$ as follows:

\begin{Def}[Quantum Wasserstein distance]
Let $\sigma_1,\sigma_0\in\mc{S}^{\otimes n}$. Then we can define the quantum Wasserstein distance between $\sigma_1,\sigma_0$ as 
\begin{equation}
\begin{aligned}
   &{W}_1(\sigma_1,\sigma_0) := \min \Big(\sum_{i=1}^{n}{c_i}:c_i\geq 0,\\
   &\;\sigma_1-\sigma_0 = \sum_{i=1}^n{\sigma^{(i)}_1 - \sigma^{(i)}_0},\sigma^{(i)}_1,\sigma^{(i)}_0\in S^{\otimes n},\;\\
   &\;\tr_i\sigma^{(i)}_1 =\tr_i\sigma^{(i)}_0 \Big)   
\end{aligned}
\end{equation}
\end{Def}

The separability of the attacker's optimal strategies cannot guarantee an upper bound of the asymptotic detection rate. From \eqref{sol1_attacker_Stackelberg} we know when the regularization parameter $\beta_n \rightarrow 0$ the attacker could distort the quantum states so that the genuine detection rate is arbitrarily small. To obtain a non-vacuous error bound we constrain the attacker's quantum operations by a norm:
\begin{assume}[`local robustness']
\label{local_robust}
Let $(E_0,E_1)$ be a pair of generic attacker's distortion operations under $n=1$ in \eqref{action_space_attacker_Alice}. Then we call them locally robust with tolerance $\eps$ if the following holds:
\begin{equation}
\begin{aligned}
(E_1,E_0)\in\mc{W},\;\mc{W}=\{ \|{E}_j-\mathbf{1}\|_{\diamond}\leq \eps,\;j=0,1\}.
    \label{eq:local_robustness}
    \end{aligned}
\end{equation}
where $\|\cdot\|_{\diamond}$ refers to the diamond norm for quantum operations \cite{watrous2018theory_quantum_computation_diamond_norm}.
\end{assume}

\begin{theo}
 Assume the attacker's distorted quantum system $\rho^{(n)*}$ meets \eqref{separability} with $\alpha_{(1)}\dots \alpha_{(n)}$ be the optimal strategies for every copy of the true quantum state. Let $\{(E^i_0,E^i_1)\}^n_{i=1}$ be the corresponding sequence of the attacker's operations. Assume also that the attacker's quantum operations meet assumption \ref{local_robust}. Let $P^n_D,P^n_F$ be the detection rates and false alarm rates defined in \eqref{detection_rate_n} and define
 \begin{equation}
     \mc{R} = \{\hat{\rho}^{(n)}\in \mc{S}^{\otimes n}:\; \hat{\rho}^{(n)}_1-\tau(n)\rho^{\otimes n}_0 = \Pi^{n*}_1\}.
 \end{equation}
 
 Then the worst-case error bound can be characterized as follows:
\begin{equation}
\begin{aligned}
   \underset{\substack{(E^i_0,E^i_1)\in\mc{W} \\ i=1,2,\dots,n}}{\sup}\;\underset{n\rightarrow \infty}{\lim}\;&\frac{1}{n}\log(1 - P^n_D(\Pi^{n*}_1)) \\ 
   &\leq -\underset{\substack{W_1(\rho'_1,\rho_1)\leq \eps \\ \rho^{(n)}_1\in \mc{R}}}{\inf}S(\rho^{(n)}_1\|\rho'_1)).
\end{aligned}
\end{equation}
\end{theo}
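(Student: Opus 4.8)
The plan is to reduce the worst-case miss-rate exponent to a quantum large-deviation statement and then close it with the quantum Sanov theorem. First I would exploit the separability granted by Lemma~1 together with the standing assumption that the attacker applies the \emph{same} single-copy distortion $(E_0,E_1)$ to each of the $n$ states. This makes the distorted alternative hypothesis an i.i.d.\ product $\rho^{(n)*}_1=\rho'^{\otimes n}_1$ with $\rho'_1=E_1(\rho_1)$ the single-copy distorted state, so that the miss rate collapses to
\begin{equation}
1-P^n_D(\Pi^{n*}_1)=\tr\big((\mathbf{1}-\Pi^{n*}_1)\rho'^{\otimes n}_1\big),
\end{equation}
i.e.\ the probability that the fixed Helstrom threshold test $\Pi^{n*}_1$ of \eqref{sol_stackelberg_detector} accepts $H_0$ while the true state is the product $\rho'^{\otimes n}_1$.

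Next I would read $\mathbf{1}-\Pi^{n*}_1$ as the acceptance region for $H_0$ and identify $\mc{R}$ as the set of $n$-copy states consistent with that region at threshold $\tau(n)$; this is exactly the ``rare-event'' set entering a Sanov bound. Invoking the quantum Sanov theorem of \cite{li2014quantum_asymptotic_hypo_sanov}, the probability that the product source $\rho'^{\otimes n}_1$ is detected in this region decays with exponent equal to the relative-entropy distance from the set to the true state,
\begin{equation}
\limsup_{n\to\infty}\frac{1}{n}\log\tr\big((\mathbf{1}-\Pi^{n*}_1)\rho'^{\otimes n}_1\big)\leq -\inf_{\rho^{(n)}_1\in\mc{R}}S(\rho^{(n)}_1\|\rho'^{\otimes n}_1),
\end{equation}
the non-commutative analogue of the classical statement $\tfrac{1}{n}\log P(\hat{P}_n\in\Gamma)\to-\inf_{p\in\Gamma}D(p\|q)$ with true law $q=\rho'_1$ and set $\Gamma=\mc{R}$. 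Here $S(\rho^{(n)}_1\|\rho'^{\otimes n}_1)$ is the $n$-copy relative entropy that the theorem abbreviates as $S(\rho^{(n)}_1\|\rho'_1)$, and additivity of the relative entropy on product states rewrites it in the per-copy form matching the factor $\tfrac{1}{n}$ on the left.

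It then remains to control the attacker through the constraint set on $\rho'_1$. The local-robustness Assumption~\ref{local_robust} bounds the channel perturbation in diamond norm, $\|E_1-\mathbf{1}\|_{\diamond}\leq\eps$; using a contractivity estimate relating the diamond norm of a channel to the order-$1$ quantum Wasserstein distance of its output \cite{de2021quantum_wasserstein_distance_1}, every admissible distortion satisfies $W_1(\rho'_1,\rho_1)\leq\eps$, so the attacker's feasible set of alternatives is \emph{contained} in the Wasserstein ball $\{W_1(\rho'_1,\rho_1)\leq\eps\}$. Taking the supremum over all admissible attacks and using $\sup(-f)=-\inf(f)$, enlarging the feasible set from the attacker's diamond ball to the whole Wasserstein ball can only increase the supremum, whence
\begin{equation}
\sup_{(E^i_0,E^i_1)\in\mc{W}}\Big(-\inf_{\rho^{(n)}_1\in\mc{R}}S(\rho^{(n)}_1\|\rho'_1)\Big)\leq-\inf_{\substack{W_1(\rho'_1,\rho_1)\leq\eps\\ \rho^{(n)}_1\in\mc{R}}}S(\rho^{(n)}_1\|\rho'_1),
\end{equation}
which is precisely the asserted bound; the inequality (rather than equality) is exactly the price of this containment.

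The main obstacle is the rigorous application of the quantum Sanov theorem to the measurement-defined set $\mc{R}$, which need not be convex or closed: I expect to have to pass to its closure (or a suitable convexification) and verify that the relative-entropy infimum is unchanged, and to check that the Helstrom acceptance region of $\Pi^{n*}_1$ is of the precise form for which the cited exponent is attained. A second delicate point is the diamond-norm-to-Wasserstein inequality, whose constant I would need to track and possibly absorb into $\eps$. Finally, the interchange of the outer supremum over the compact attack set $\mc{W}$ with the $n\to\infty$ limit—needed to pass from the per-attack exponent to the stated worst-case quantity—requires a uniformity argument, e.g.\ equicontinuity of the finite-$n$ exponents in $\rho'_1$, to guarantee that the limit survives the supremum.
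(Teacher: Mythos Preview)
Your proposal follows essentially the same route as the paper: reduce to an i.i.d.\ product via separability, invoke quantum Stein/Sanov to obtain the relative-entropy exponent over the set $\mc{R}$, use the diamond-norm bound to place the distorted state inside a Wasserstein ball around $\rho_1$ (the paper isolates this as a separate lemma), and then take the supremum over admissible attacks. Your caveats about closure/convexity of $\mc{R}$, the constant in the diamond-to-$W_1$ estimate, and the $\sup$/$\lim$ interchange are well taken---the paper does not address them and simply asserts the Sanov identity and the lemma---but the overall architecture is the same.
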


\begin{proof}
We first introduce the following lemma:
\begin{lemma}
\label{lemma: wasserstein_bound_quantum_operator}
If $\|{E}_1 - \mathbf{1}\|_{\diamond}\leq \eps$, then ${W}_1(\mc{E}(\rho_1),\rho_1)\leq \eps$. 
\end{lemma}

{\textbf{Proof of Lemma \ref{lemma: wasserstein_bound_quantum_operator}}:}
We refer to the definition of Wasserstein distance ${W}_{1}$  between the two measures $\rho'_1$ and $\rho$ as  
\begin{equation}
    W_{1}({E}_1(\rho_1),\rho_1) = \inf\{\|\alpha_1-\rho_1\|_1,\;{E}_1(\rho_1) = \alpha_1\}. 
    \label{def:wasserstein_infty}
\end{equation}
Considering $\|{E}_1(\rho_1)-\rho_1\|_1\leq \eps$ and \eqref{def:wasserstein_infty} yields the conclusion.

We now prove the theorem. By quantum stein's lemma \cite{hiai1991quantum_stein_lemma} and quantum Sanov's theorem \cite{li2014second_quantum_sanov} we have
\begin{equation}
\begin{aligned}
   \underset{n\rightarrow\infty}{\lim}\;  \frac{1}{n}\log (1- P^n_D(\Pi^n_1))=-\underset{\substack{\rho^{(n)}_1\in \mc{R}}}{\inf}S(\rho^{(n)}_1\|\rho'_1),
\end{aligned}
\end{equation}
We can replace the density operator $\rho_1$ above with any other distorted operator $\rho'_1\in\mc{S}$ such that $W_1(\rho'_1,\rho_1)\leq \eps$. We pick $\alpha_{(j^*)}$, the one inducing the largest Wasserstein distance Taking supremum on both sides over all density operators $\rho'^*_1$ that characterize the attacker's optimal strategies and that satisfy the local robustness condition  \eqref{eq:local_robustness}, we have 
\begin{equation}
\begin{aligned}
   &\underset{W_1(\alpha_{(j)*},\rho_1)\leq\eps}{\sup}\underset{n\rightarrow\infty}{\lim}\;  \frac{1}{n}\log \tr((\mathbf{1} -\Pi^n_1) \rho'^{*\otimes n}_1) \\ 
   &\leq \underset{\substack{W_1(\rho'^*_1,\rho_1)\leq \eps \\ \rho^{(n)}_1\in \mc{R}}}{\sup} -S(\rho^{(n)}_1\|\rho'_1) \\
   &=-\underset{\substack{W_1(\rho'^*_1,\rho_1)\leq \eps \\ \rho^{(n)}_1\in \mc{R}}}{\inf} S(\rho^{(n)}_1\|\rho'_1)
\end{aligned}
\end{equation}
as stated in the theorem. 
\end{proof}

\begin{figure}
    \centering
    \includegraphics[scale=0.3]{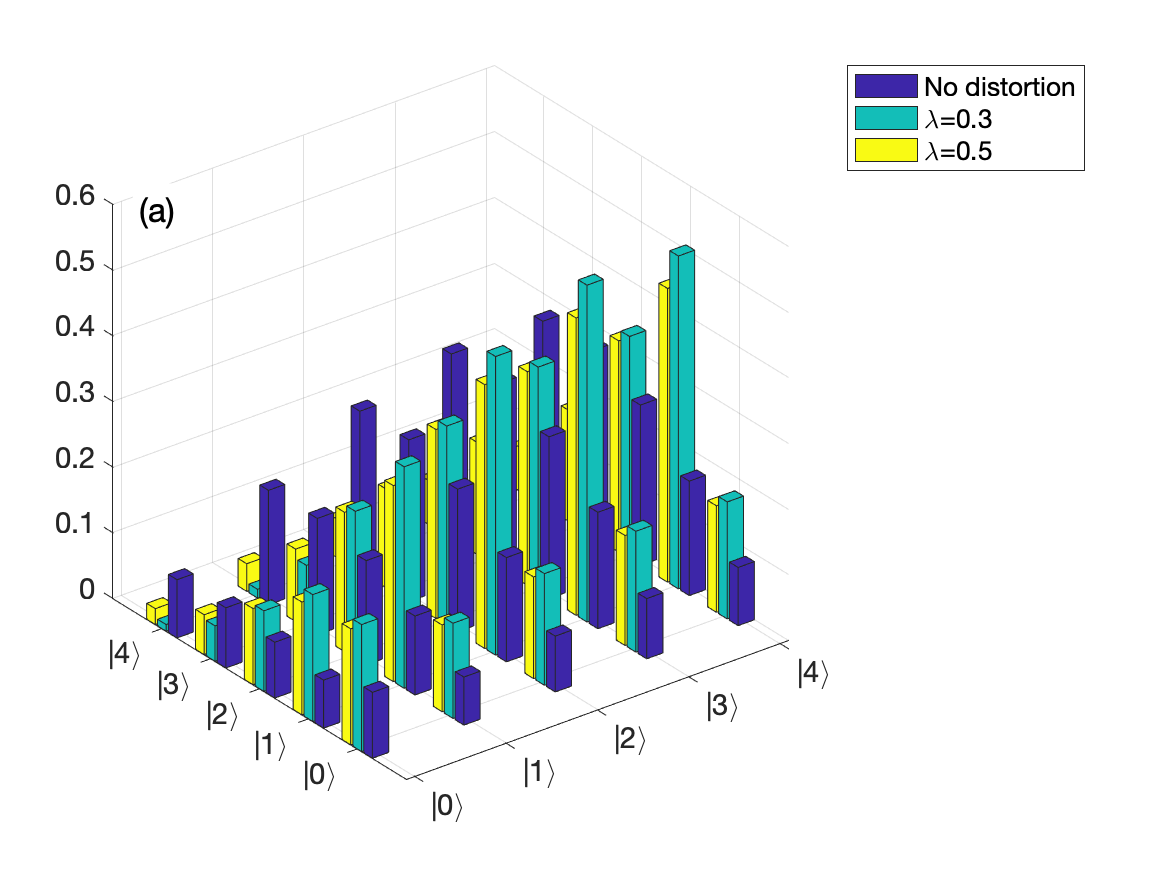} \\
    \includegraphics[scale=0.3]{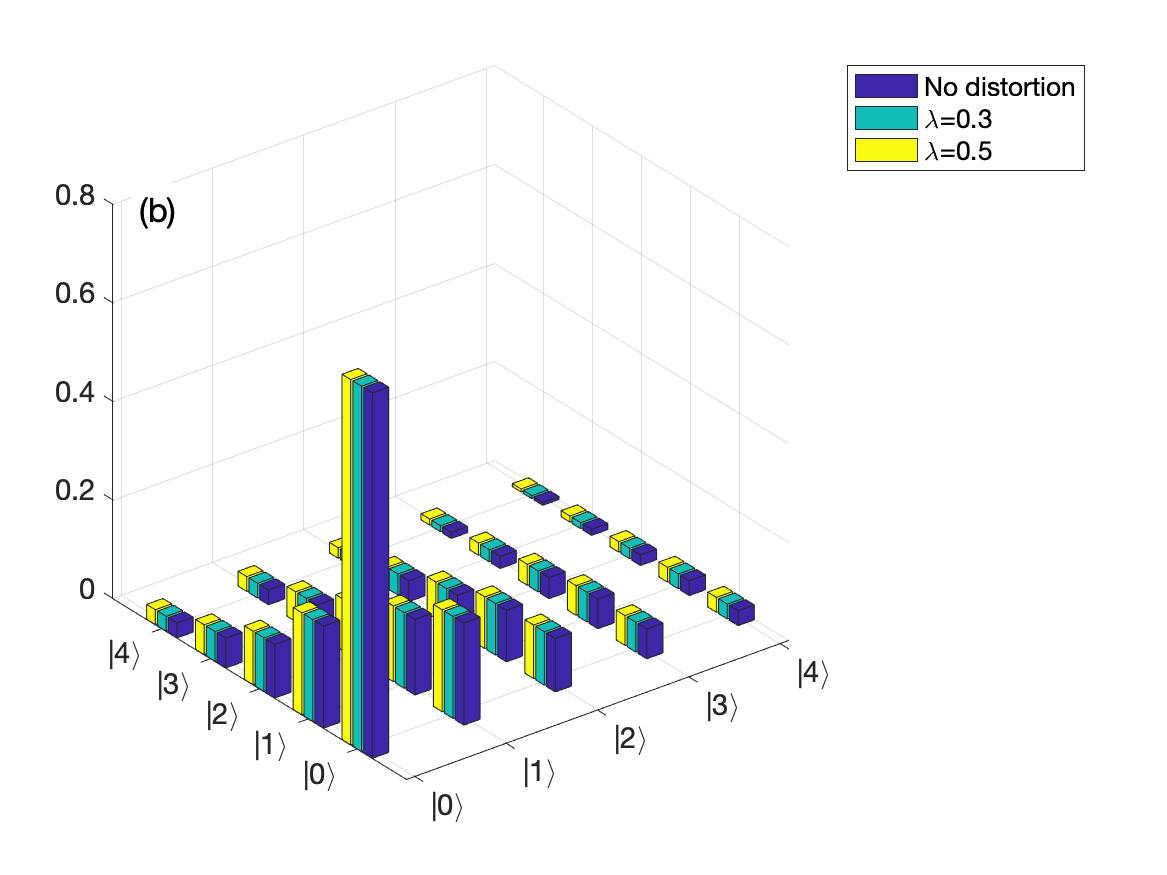} 
    \caption{Attacker's optimal strategies $\rho'^*_1$ (upper) and $\rho'^*_0$ (lower) under different choices of $\lambda$. The genuine quantum states under the hypothesis $H_0,H_1$ are defined as Gaussian states in \eqref{H_0_H1_target_no_target}. }
    \label{fig:attacker_strategy_bar3}
\end{figure}

\section{Case Study: spoofing in quantum radars}
\label{sec:radar_case_study}
In this section, we apply our formulation of passive quantum detectors under repeated observations upon the scenario of detection with quantum radars. As depicted in Figure  \ref{fig:quantum_man_in_the_middle}, a quantum radar \cite{torrome2020quantum_radar} sends out photonic quantum states into the atmosphere and collects reflective signals with its sensor. If the target object is present ($H_1$), the reflective signals will consist of photonic states representing coherent states with positive photon numbers. Otherwise, the reflective signals will resemble a vacuum state, implying the object is absent $(H_0)$. The detection process is affected by the existence of the environmental noise states, usually resulting mainly from the background light, that contaminates the reflective signals under both hypotheses.

There are currently four types of conceptual quantum radars, depending on whether they exploit entanglement and correlation of photonic quantum states. We assume that the quantum radar produces coherent, non-entangled quantum photonic states \cite{torrome2020quantum_radar}. For simplicity, we apply finite-dimensional truncated Gaussian states \cite{miranowicz1994coherent_finite_dimension} with $K$ as the maximum number of photons: 
 \begin{equation}
        |\zeta\rangle = \frac{e^{-|\zeta|^2}}{Z_K}\sum_{i=0}^{K}{\frac{\zeta^i}{\sqrt{i!}}|i)},
        \label{number_photon_states}
    \end{equation}
where $|i)\in\mc{H}$ stands for the $i$ number states of photons with the normalization factor $Z_K$. We adopt Lloyd's quantum illumination protocol \cite{lloyd2008quantum_illumination} to formulate the quantum noise state as a coherent state represented by a large mean photon number. As a consequence, we can $\rho_0,\rho_1$ as the density matrices corresponding to the hypotheses $H_0,H_1$ as follows: 
\begin{equation}
    \label{H_0_H1_target_no_target}
   \begin{aligned}
  H_0:\rho_0 &= (1-N_B)|0\rangle\langle 0| + N_B{\ketk\brak}, 
   \\
  H_1:\rho_1 &= (1-x)\left((1-N_B)|0\rangle\langle 0| + N_B{\ketk\brak}\right) \\
    &\quad+ x|l\rangle\langle l|,
\end{aligned} 
\end{equation}
where $x\in[0,1]$ refers to the reflective index, $N_B\in[0,1]$ characterizes the noise level of the environment, and $k$ characterizes the mean photon number of the noise state described in \eqref{number_photon_states}.

An attacker aims to launch a spoofing attack to undermine the performance of the detector  by intercepting the transmission of reflective quantum signals and sending manipulated signals to the quantum radar detector as depicted in Figure \ref{fig:quantum_man_in_the_middle}. We assume the quantum radar makes decisions `naively' as in \eqref{detector_stackelberg_passive} and the attacker correspondingly distorts the $n$ quantum states according to \eqref{sol1_attacker_Stackelberg}.   
In \cite{hu2022quantum_MITHM} we have illustrated the performance of a passive quantum detector in terms of receiver-operational characteristic curve (ROC curve) \cite{levy2008principles_signal_detection}. In this paper, we aim to find detection rates $P^n_F$ and false alarm rates $P^n_F$ in terms of the number of quantum states received under adversaries.  

We choose $N_B = 0.4, l =2, k = 1, x= 0.8$ in \eqref{H_0_H1_target_no_target} and $\tau(n) = \frac{0.7n + 1.5}{n+1} $, where $n$ denotes the number of pulses of reflective signals received by the radar. In Figure  \ref{fig:attacker_strategy_bar3} we plot the attacker's optimal strategies in terms of density matrices together with the original density states under $n=1$ with different choices of $\lambda$. We can observe that the attacker does not distort $\rho_0$, as she aims at minimizing detection rates and does not focus on false alarm rates. On the other hand, the attacker attenuates the quantum states representing lower mean photon numbers and enhances those with higher photon numbers in $\rho_1$ since quantum states with lower mean photon numbers are more likely leading to a decision of $\rho_1$. We observe a similar phenomenon when illustrating the attacker's optimal strategies when given multiple non-coherent quantum states.    

In Figure \ref{fig:seq_hypo_passive_detector}, we plot the detection rates and false alarm rates in terms of the number of copies of quantum states $N$ that the detector accepts when the quantum states undergo adversarial attacks of different magnitudes that are parameterized by $\lambda$ (or equivalently, $\beta_n$). The scenario $\lambda=0$ refers to a distortion-free case which can be used as a baseline.
We saw in the plot that both the false alarm rate and the miss rate (equal to $1-P^n_F(\Pi^{n*}_1)$) converge to zero  exponentially yet at different rates. Lower choices of $\lambda$ (equivalently, $\beta_n$) imply less cost for the attacker in distorting states characterized in \eqref{attacker_stackelberg}, leading to a slower exponential rate of decaying of miss rates, which is consistent with our analysis developed in section \ref{sec:error_bound}. In the meantime, we observe that the false alarm rates converge to zero exponentially at the same rate, in spite of the attacker's regularization parameter $\lambda$ since the attacker does not distort $\rho_0$. 

\begin{figure}
    \centering
    \includegraphics[scale=0.25]{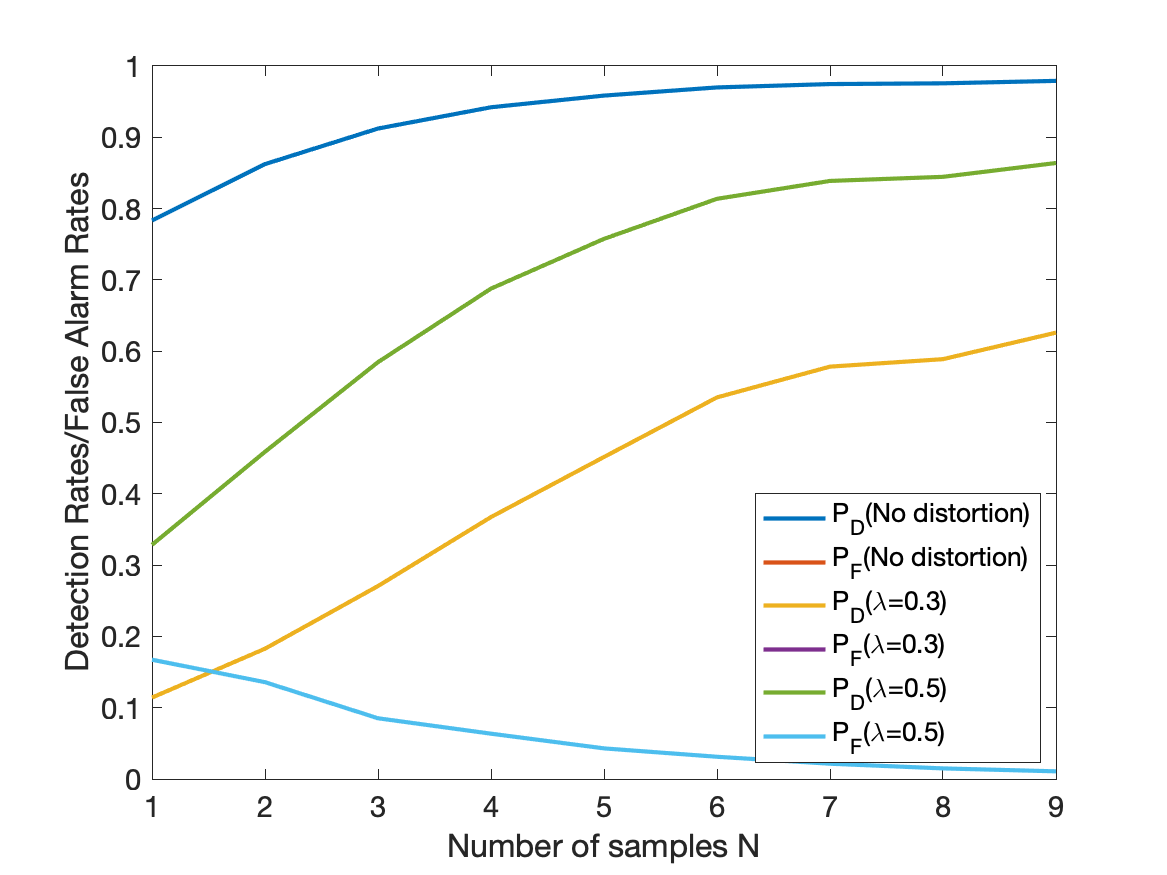}
    \caption{The detection rates $P^n_D$ and false alarm rates $P^n_F$ of a passive quantum detector in terms of the number of repeated observations. The attacker manipulates the quantum states based on \eqref{sol1_attacker_Stackelberg}.   }
    \label{fig:seq_hypo_passive_detector}
\end{figure}

\section{Conclusion}
\label{sec:conclusion}
We have formulated a passive quantum detector based on a quantum hypothesis testing framework under repeated observations. We have also studied the fundamental limits of performance of the passive quantum detector asymptotically and characterized the results as a worst-case asymptotic error bound, which serves as a variation of quantum Sanov's theorem.  Our numerical results illustrate that the `naive' detector manages to achieve a miss rate and a false alarm rate both exponentially decaying to zero given infinitely many quantum states, although the miss rate decays to zero at a much slower rate than a quantum non-adversarial counterpart.  

One possible direction to extend our work includes sequential adversarial hypothesis testing, i.e., considering the number of samples needed to achieve the desired detection rate or the false alarm rate, in particular under the scenario where observations incur costs themselves. Also, the attacker may be constrained to launch attacks obeying some simple pattern, thus the detector could reach a lower bound under those constraints for the attacker. In addition, we have shown a conceptual application of our formulation in studying adversarial attacks on target detection systems using quantum radars. It is promising to investigate a hands-on implementation of our framework to enhance the security of radar systems if we specify the radar systems with further details.    

\bibliographystyle{IEEEtran}
\bibliography{rl}
\appendix 
\section{The proof of proposition 1}
\begin{proof}
We obtain the attacker's optimal strategies by solving the optimization problem \eqref{attacker_stackelberg}. We apply first-order conditions by taking partial derivatives of $u^{(n)}_A$ in terms of $\rho^{(n)}_0,\rho^{(n)}_1$ and set them to be zero, respectively:
\begin{align}
    0 &\equiv \frac{\partial u^{(n)}_A}{\partial \rho^{(n)}_1} = \Pi^{n*}_1 + \beta_n (\ln(\rho^{(n)}_1) - \ln(\rho^{\otimes n}_1)) \\
      0 &\equiv \frac{\partial u^{(n)}_A}{\partial \rho^{(n)}_0} =  \beta_n (\ln(\rho^{(n)}_0) - \ln(\rho^{\otimes n}_0)). 
\end{align}
Thus without considering the constraints $\rho^{(n)}_0,\rho^{(n)}_1\in \mc{S}^{\otimes n}$, we arrive $\rho^{(n)*}_1 \sim {\exp(\ln\rho^{\otimes n}_1 - \frac{1}{\beta_n}\Pi^{n*}_1)},\;\rho^{(n)*}_0 \sim \rho^{\otimes n}_0$, taking into account the trace constraints of density operators $\rho^{(n)*}_1,\rho^{(n)*}_0$ yields the solution in \eqref{sol0_attacker_Stackelberg}\eqref{sol1_attacker_Stackelberg}.
\end{proof}
\end{document}